\newtheorem{thm}{Theorem}
\newtheorem{cor}{Corollary}
\newtheorem{prop}[thm]{Proposition}
\theoremstyle{remark}
\theoremstyle{definition}
\providecommand{\U}[1]{\protect\rule{.1in}{.1in}}
\newcommand{\pdrho}{\frac{\partial}{\partial \rho}}
\begin{document}

\title{\LARGE CoMPflex: CoMP for In-Band Wireless Full Duplex}

\author{Henning Thomsen, Petar Popovski, Elisabeth de Carvalho, Nuno K. Pratas,\\ Dong Min Kim and Federico Boccardi
\thanks{Part of this work has been performed in the framework of the FP7 project ICT-317669 METIS, which is partly funded by the European Union.
Part of this work has been supported by the Danish High Technology Foundation via the Virtuoso project.}
\thanks{H. Thomsen, P. Popovski, E. de Carvalho, N. Pratas and D. M. Kim are with the Department of Electronic Systems, Aalborg University, Denmark
(email: \{ht, petarp, edc, nup, dmk\}@es.aau.dk).}
\thanks{F. Boccardi is with Ofcom, London, United Kingdom
(email:
\newline
federico.boccardi@ofcom.org.uk).
}%
}

%
\maketitle

\vspace{-20mm}

\begin{abstract}
In this letter we consider emulation of a Full Duplex (FD) cellular base station (BS)
by using two spatially separated and coordinated half duplex (HD) BSs. The proposed system is termed \emph{CoMPflex (CoMP for In-Band Wireless Full Duplex)} and at a given instant it serves two HD mobile stations (MSs), one in the uplink and one in the downlink, respectively. 
We evaluate the performance of our
scheme by using a geometric extension of the one-dimensional Wyner model, which takes
into account the distances between the devices. The results show that CoMPflex leads to gains in terms of sum-rate and energy efficiency with respect to the ordinary FD, as well as with respect to a baseline scheme based on unidirectional traffic.
%
\end{abstract}


\vspace{-5mm}

\section{Introduction}\label{sec:Introduction}

5G wireless systems are expected to have a large number of Base Stations (BSs) per unit area~\cite{boccardi2014vodafone}. The inter-BS connections lay the ground to use cooperative techniques, commonly
referred to as \emph{Coordinated Multi-Point (CoMP)}~\cite{lee2012coordinated}. Another
important wireless trend is the use of in-band full-duplex (FD) wireless transceivers  
\cite{DBLP:journals/corr/SabharwalSGBRW13}, expected to lead to a
two-fold spectral efficiency gain~\cite{goyal2013analyzing}. The key challenge in FD
is the high transceiver complexity, required to cope with the
strong self-interference induced by the \emph{downlink (DL)} transmission path into the
\emph{uplink (UL)} receiving path\cite{DBLP:journals/corr/SabharwalSGBRW13}. The high
transceiver complexity will likely make the FD feasible only for the BS, while a Mobile
Station (MS) will remain half-duplex (HD), but the system can still harvest the FD
gain by scheduling at least two different MSs, as shown in Fig.~\ref{fig:FDBS1}.

Traditionally, UL and DL are treated in isolation, i.e. all MSs have only DL or only UL traffic, see Fig.~\ref{fig:MCOMP1}. The advent
of FD has shifted the focus towards two-way optimization of wireless
networks~\cite{InterferenceSpin}.
This leads to two new transmission modes, as the one depicted on Fig.~\ref{fig:HDBS1},
while in the second mode the roles are reversed (MS1 transmits and MS2 receives). In Fig.~\ref{fig:HDBS1} HD-BS1 sends the signal $x_{D1}$ in the DL to MS1,  while HD-BS2 receives $x_{U2}$ in the UL from MS2. HD-BS2 receives interference from HD-BS1, while MS1 receives interference from MS2. However, HD-BS1 can use the high-bandwidth wired connection to HD-BS2 to send the data of $x_{D1}$ to HD-BS2, such that HD-BS2 can recreate $x_{D1}$ and perfectly cancel the interference from HD-BS1. Note that $x_{U2}$
remains as interference to MS1, but it could be mitigated via e.g. scheduling.  The setting on Fig.~\ref{fig:HDBS1} operates equivalently as a single FD-BS, see Fig.~\ref{fig:FDBS1}. With Time Division Duplexing (TDD), the roles of MS1 and MS2 are reversed in the next transmission slot, thus serving both MSs two-way. 

\begin{figure}[tb]
\centering
\subfloat[]{
\includegraphics[width=0.3\linewidth]{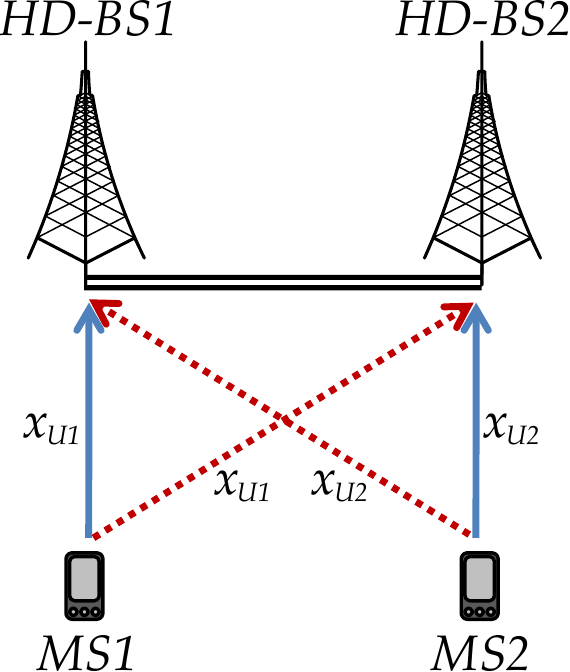}
\label{fig:MCOMP1}}
\subfloat[]{
\includegraphics[width=0.3\linewidth]{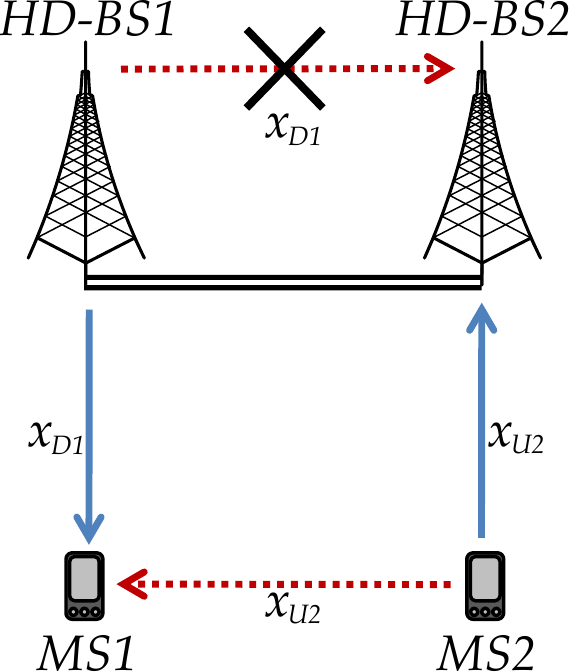}
\label{fig:HDBS1}}
\subfloat[]{
\includegraphics[width=0.3\linewidth]{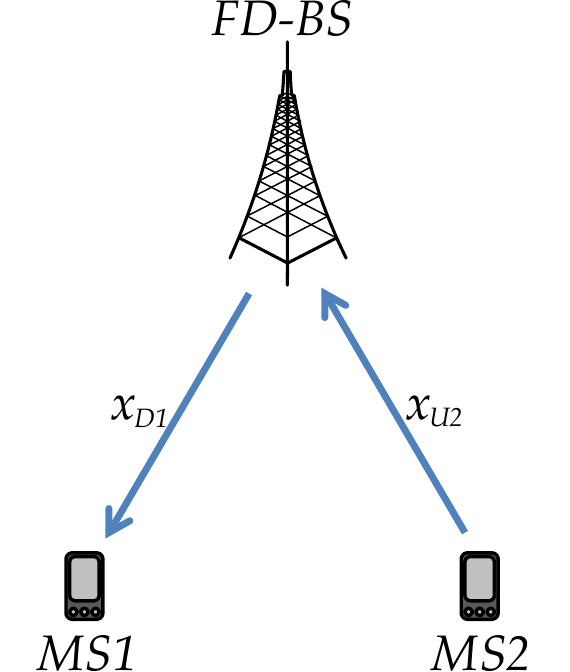}
\label{fig:FDBS1}}
%
\caption{(a) Baseline scheme for UL; (b) The proposed CoMPflex scheme; (c) FD transmission. Solid arrows denote signals, dashed arrows denote interference, the double line is a wired connection.}
\label{fig:Schemes}
\end{figure}

The previous example constitutes the main proposal of this letter: use
the CoMP infrastructure, with interconnected HD-BSs, in order to obtain
a distributed FD implementation, termed \emph{CoMPflex (CoMP for in-band full duplex)}.
In traditional CoMP, the BSs cooperate to serve the MSs either in DL or in UL, but not both simultaneously. The price of the performance gains of traditional CoMP over the non-cooperative schemes is the complex processing of interference using, e.g. dirty paper coding. Different from that, CoMPflex is a way to get performance gains over non-cooperative schemes, without complicated processing at the BSs. Note that the channel between two different BS is static, such that it can be easily estimated.
Assuming a sufficient wired bandwidth between the HD-BSs,
CoMPflex serves simultaneously two terminals that have opposite (UL/DL)
connections. The ordinary FD is a special case of CoMPflex where the HD-BSs are at a distance zero.
CoMPflex brings four advantages: (i) the
coupling losses between the antennas associated with each path are
mitigated; (ii) the self-interference coming from the DL
transmission is now within the same order of magnitude as the UL
signal, reducing the need for a high dynamic range receiver; and (iii)
the reduction of the distance between the cellular devices and HD-BS,
which leads to potential energy savings \cite{chih2014toward} both at
the device and network side, and (iv) the separated transmitter and
receiver of the HD-BSs can bring the infrastructure closer to the
MSs, resulting in rate gains and decrease in interference.
We will show that such a separation is \emph{also} beneficial for a traditional setting with unidirectional traffic. In this way, CoMPflex brings a two-part gain over traditional non-cooperative schemes: (1) by an optimal spatial separation between the HD-BSs; and (2) serving the MSs with bidirectional traffic jointly instead of unidirectional traffic.
Our analysis is based on the classical Wyner model~\cite{wyner1994shannon}, 
enriched to capture the geometric setting of CoMPflex and allow for studying the effect
of varying the distance between the connected BSs.
\begin{figure*}[t]
	\centering
		\includegraphics[width=0.8\linewidth]{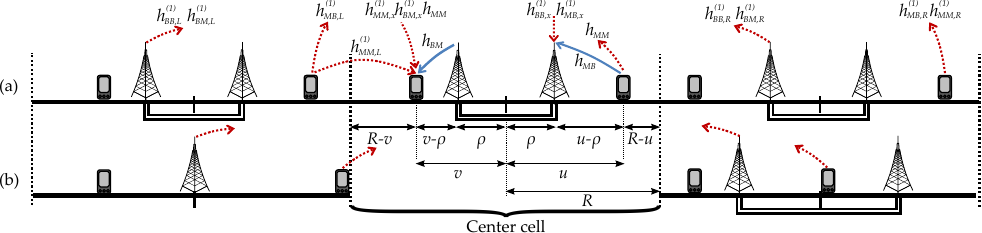}
	\caption{System model for CoMPflex, (a): The center cell and two adjacent cells, one to the left ($L$) and one to the right ($R$), with $x \in \{L, R\}$; (b): The worst case situation of transmitting BSs and MSs from neighboring cells.}
	\label{fig:SystemModelCoMPflex}
\end{figure*}

\vspace{-2mm}

\section{System Model}
\label{sec:SystemModel}

Our system model is perhaps the simplest one where the gain of CoMPflex can be demonstrated. It is based on an one-dimensional Wyner model, as shown in
Fig.~\ref{fig:SystemModelCoMPflex}a, where the cells have radius $R$.\footnote{Here we treat the 1-D scenario. Obviously, the analysis can be extended to 2-D and 3-D models, but this is out of the scope for this introductory article.}
The figure shows a 1-tier interference model, but it can be extended to several interfering tiers.
In the center cell, there is either one FD-BS placed at the cell center (corresponding to $\rho=0$), or two HD-BSs, one for DL placed in the left half of the cell, and one UL placed in the right half (corresponding to $\rho > 0$).
Both HD-BSs have the same distance $\rho$ to the cell center.
We assume a sufficient number of active users in the cell, so that one DL-MS in the left half and one UL-MS in the right half can always be scheduled.
The DL-MS is placed uniformly at random at position $v$ in the left half of the cell, and the UL-MS is placed uniformly at random at position $u$ in the right half of the cell.

In the baseline scheme, the BSs and MSs are deployed as in CoMPflex, except that the traffic is bidirectional in CoMPflex and unidirectional in the baseline. In the baseline scheme the BSs are not cooperating, which is sufficient to show the effect of using bidirectional versus unidirectional traffic. In both schemes, the positions of the nodes in the interfering cells depend on the interference model, see Sec.~\ref{sub:InterferenceRegimes}.

\vspace{-3mm}

\subsection{Signal Model}

We explain the signal model for CoMPflex only; the one for the baseline scheme can be derived from it.
In Fig.~\ref{fig:SystemModelCoMPflex}a, wireless channels are indicated as arrows showing
the transmission direction: a solid line for signal and a dashed line for interference.
For clarity, only the links relevant to the center cell are shown.
All channels are Rayleigh faded with unit mean power.
The pathloss\footnote{An offset equal to 1 is added to avoid a singularity when $d=0$.} is modeled as $\ell(d) = (1+|d|)^{-\alpha}$,
where $d$ is the distance from the transmitter to the receiver and $\alpha$ is the path loss exponent.
The system bandwidth is normalized to 1 Hz. All nodes use the same frequency.
We assume Additive White Gaussian Noise (AWGN), with power $\sigma^2$.
Finally, we assume full channel state information at all nodes.

The distance between BS and UL-MS is $\vert u -
\rho \vert$, and between DL-MS and BS, $\vert v - \rho \vert$. We restrict $\rho$ to 
$\left[0,R \slash 2 \right]$, since $\rho > R \slash 2$ would allow the MS to be closer to an
interfering BS. In the UL, the signal received at the BS, $y_{B}$, is
\begin{align*}
	y_{B} = h_{MB} \ell ( \rho - u)^{\frac{1}{2}} x_M + I_{B} + z_{B}
\end{align*}
where $x_M$ is the symbol sent from MS and $z_{B}$ is the AWGN at the BS. The term
$h_{MB}$ denotes the channel from the MS to the BS. $I_{B} = I_{BB} + I_{MB}$ is the
interference from BSs ($I_{BB}$) and MSs ($I_{MB}$) in the neighboring cells (see
Appendix). Then
\begin{equation}\label{eqn:SINRUL}
	\gamma_{U}(\rho,u) = \frac{P_M(\rho) g_{MB} \ell(\rho - u) }{\sigma^2 + \overline{I}_{B}}
\end{equation}
is the UL SINR, where $P_M = |x_M|^2$, $g_{MB} = \vert h_{MB} \vert^2$ and $\overline{I}_{B} = \vert I_{B} \vert^2$ is the received interference power at the BS.

The DL-MS receives the signal $y_{M}$,
\begin{align*}
	y_{M} = h_{BM} \ell (\rho - v)^{\frac{1}{2}} x_B + h_{MM} \ell ( u + v)^{\frac{1}{2}} x_M + I_{M} + z_{M}
\end{align*}
where $x_B$ is the signal sent from BS, $h_{BM}$ and $h_{MM}$ denote respectively the
channels between BS and MS, and the intra-cell channel between the MSs. $I_{M} = I_{BM}
+ I_{MM}$ is the interference from the BSs ($I_{BM}$) and the users ($I_{MM}$) in
neighboring cells given in the Appendix, and $z_{M}$ is the AWGN at the MS. The DL SINR is
\begin{equation}\label{eqn:SINRDL}
	\gamma_{D}(\rho,u,v) \!=\! \frac{ P_B(\rho) g_{BM} \ell(\rho - v)}{ P_M(\rho) g_{MM} \ell(u+v) \!+\! \overline{I}_{M} \!+\! \sigma^2 }
\end{equation}
where $P_B = |x_B|^2$, $g_{BM} = \vert h_{BM} \vert^2$, $g_{MM} = \vert h_{MM} \vert^2$ and $\overline{I}_{M} = \vert I_{M} \vert^2$ is the received interference power at the MS.


\subsection{Power Adjustment and Inter-Cell Interference Models}
\label{sub:PowerAdjustmentforCoMPflex}
\label{sub:InterferenceRegimes}

The core of CoMPflex is in bringing the MSs and BSs closer to each other, therefore it is natural to allow them to adjust their transmission power accordingly.
This transmission power is computed based on the required power received at either the BS, $P_B^{req}$, and the MS, $P_M^{req}$, when the MS is placed at the cell edge.
The required powers are defined in Sec.~\ref{sec:PerformanceResults}.
We denote the adjusted power as $P_B(\rho)$ and $P_M(\rho)$, for the BS and MS respectively,
which is computed as:
\begin{align*}
	P_y^{req} \leq P_x(\rho) \ell(R-\rho) \Leftrightarrow P_x(\rho) \geq P_y^{req} \ell(R-\rho)^{-1},
\end{align*}
given a required cell edge rate $R_{x_0}$ and a corresponding outage probability $\varepsilon$, where $x,y \in \{B, M\}$. It can be shown that the required power received at $y$, is $P_y^{req} = - \frac{ (2^{R_{x_0}}-1) \sigma^2 }{
\log(1-\varepsilon)\ell(R) }$.
Finally, in Sec.~\ref{sec:PerformanceResults} we will also evaluate the \emph{constant power} case, where $P_B = P_B(\rho=0)$ and $P_M = P_M(\rho=0)$.


We consider two inter-cell interference models when assessing the performance of CoMPflex and the baseline.
%
In the first model, for both CoMPflex and the baseline, nodes in the interfering cells follow the same deployment pattern as the center cell, with one FD-BS or two HD-BSs placed at distance $\rho$ from the center of the cell. The setup is shown in Fig.~\ref{fig:SystemModelCoMPflex}a, and the expression of the interference is shown in the Appendix.

The second model is the worst-case interference, see Fig.~\ref{fig:SystemModelCoMPflex}b, where the transmitting BSs and MSs are indicated by dashed arrows. In each cell, the interfering BSs and MSs are placed as close to the center cell as possible, since we want the interference is maximized at center cell. 
For the cell to the left, the worst-case position of the interfering BS is at $\rho=0$ (the center of its cell) and the MS at the cell edge. For the cell to the right, the worst-case position of the interfering BS is when $\rho=R \slash 2$ (maximal) and the MS is at the cell center. These worst-case scenarios are not formally proved due to lack of space.


\section{Analysis} 
\label{sec:analysis} In this section we show that a nonzero $\rho$ will benefit the
performance in terms of sum-rate and energy efficiency.



The sum-rate $R_{sum}(\rho)$ is given by
\begin{align*}
	R_{sum}(\rho) &= \log_2 \left( 1 +\gamma_{U}(\rho)\right) + \log_2 \left( 1 +\gamma_{D}(\rho)\right) \\ &= \log_2 \left( 1 +\gamma_{U}(\rho) + \gamma_{D}(\rho) + \gamma_{U}(\rho) \cdot \gamma_{D}(\rho)\right).
\end{align*}
%

For analytical tractability we consider \emph{stationary} conditions: (i) the
channel gains of all links are unitary, i.e. $g_x = 1$; (ii) the inter-cell interference
comes from the nearest cell tier; (iii) the interfering MSs are at their
average positions; and (iv) we fix the MSs in the center cell at the positions $u$ and $v$.

\begin{prop}\label{prop:SINRProduct}	
	In stationary conditions the following two statements hold:
	\begin{itemize}
	\item $\forall \rho \in [0, v]$ in the DL, we have $\gamma_{D}(u,v,0) \leq \gamma_{D}(u,v,\rho)$;
	\item $\forall \rho \in [0, u]$ in the UL, we have $\gamma_{U}(u,0) \leq \gamma_{U}(u,\rho)$.
	\end{itemize}
\end{prop}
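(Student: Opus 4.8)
The plan is to reduce both inequalities to a single monotonicity statement. After imposing the stationary conditions ($g_x=1$, nearest-tier interference, interferers at their fixed average/worst-case positions, and the center-cell MSs pinned at $u,v$) and substituting the power law $P_x(\rho)=P_y^{req}\,\ell(R-\rho)^{-1}=P_y^{req}(1+R-\rho)^{\alpha}$, I would show that each SINR is nondecreasing in $\rho$ on the stated interval. Since in both bullets the lower endpoint is $\rho=0$, monotonicity immediately delivers $\gamma_{D}(u,v,0)\le\gamma_{D}(u,v,\rho)$ and $\gamma_{U}(u,0)\le\gamma_{U}(u,\rho)$.

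First I would treat the signal (numerator) term, which behaves identically in both directions. For $\rho\le u$ (UL) resp.\ $\rho\le v$ (DL) the useful received power is $P_B^{req}(1+R-\rho)^{\alpha}(1+u-\rho)^{-\alpha}$ resp.\ $P_M^{req}(1+R-\rho)^{\alpha}(1+v-\rho)^{-\alpha}$. Writing $s_w(\rho)=\frac{1+R-\rho}{1+w-\rho}$ for $w\in\{u,v\}$, the numerator is proportional to $s_w(\rho)^{\alpha}$ and
\begin{align*}
s_w'(\rho)=\frac{R-w}{(1+w-\rho)^{2}}\ \ge\ 0,
\end{align*}
because the served MS lies inside its half-cell, so $w\le R$. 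Hence the numerator is nondecreasing: although the transmit power falls as the BS nears the edge, the shrinking BS--MS distance more than compensates.

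Next I would handle the DL denominator, $P_B^{req}(1+R-\rho)^{\alpha}(1+u+v)^{-\alpha}+\overline{I}_{M}+\sigma^2$. Its intra-cell self-interference term is $\propto(1+R-\rho)^{\alpha}$ and therefore nonincreasing in $\rho$ (the UL-MS transmit power drops), while $\overline{I}_{M}$ is evaluated at the fixed DL-MS position $v$ with interferers whose positions, in the worst-case model, do not depend on the center-cell $\rho$, so that term is constant. Thus the DL denominator is nonincreasing; combined with the increasing numerator this makes $\gamma_{D}$ nondecreasing and settles the first bullet.

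The UL case is where I expect the real difficulty. Here the receiving node is the BS itself, which moves with $\rho$, so $\overline{I}_{B}=\overline{I}_{B}(\rho)$ genuinely varies: shifting the BS rightward shortens its distance to the right-hand neighbour's interferers while lengthening it to the left-hand ones, so the sign of $\overline{I}_{B}'(\rho)$ is not obvious and monotonicity of the numerator alone does not close the argument. The plan is to write $\overline{I}_{B}(\rho)$ explicitly from the Appendix as a sum of terms $(1+d_i(\rho))^{-\alpha}$ with $d_i(\rho)=c_i\pm\rho$, and then to compare logarithmic derivatives: it suffices to prove $\frac{d}{d\rho}\log s_u(\rho)^{\alpha}\ge\frac{d}{d\rho}\log\!\big(\sigma^2+\overline{I}_{B}(\rho)\big)$, i.e.\ that the relative growth of the signal dominates that of the aggregate interference. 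I expect this to follow from the worst-case geometry together with the restriction $\rho\le R/2$, which keeps the center-cell BS farther from every interferer than from its own served MS, so the near/far asymmetry never reverses. Establishing this one inequality is the crux; the remaining manipulations are routine.
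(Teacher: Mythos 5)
Your overall strategy is the same as the paper's: show each SINR is nondecreasing in $\rho$ on the stated interval by differentiating the quotient, with the signal handled through the ratio $s_w(\rho)=(1+R-\rho)/(1+w-\rho)$; your computation $s_w'(\rho)=(R-w)/(1+w-\rho)^2\ge 0$ is just a cleaner rewriting of the paper's positive factor $-1+(1+R-\rho)(1+v-\rho)^{-1}$. The first genuine problem is your DL denominator. You declare $\overline{I}_M$ constant in $\rho$ by appealing to the worst-case deployment of Fig.~\ref{fig:SystemModelCoMPflex}b, but the proposition is stated under the \emph{stationary conditions}, i.e.\ the average-position model in which the interfering cells replicate the center cell's deployment: the interfering BSs sit at distance $\rho$ from their own centers (the appendix distances $d_{BM,L}^{(n)}=2nR+\rho-v$ and $d_{BM,R}^{(n)}=2nR-\rho+v$ depend on $\rho$), and with power adjustment every interfering transmit power also scales as $(1+R-\rho)^{\alpha}$. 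The paper's proof accordingly writes the DL interference as $(1+R-\rho)^{\alpha}$ times a $\rho$-dependent bracket and argues its derivative is negative; your shortcut proves the claim for a different interference model than the one hypothesized. The gap is repairable: every DL interference term has the form $P^{req}\bigl((1+R-\rho)/(1+d_i(\rho))\bigr)^{\alpha}$ with $d_i'(\rho)\in\{-1,0,+1\}$, and the derivative of the ratio has numerator $-(1+d_i)-d_i'(1+R-\rho)$, which is negative even when $d_i'=-1$ because $d_i=2nR-\rho+v>R-\rho$. So the DL denominator is indeed nonincreasing, but this must be shown, not obtained by switching models.

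The second and more serious gap is the UL bullet, which you explicitly leave open. Your diagnosis is correct, and in fact sharper than the paper's treatment: the receiving BS moves with $\rho$, the BS--BS distance to the right neighbor is $2nR-2\rho$ (shrinking at rate $2$), and even with power adjustment that received term is proportional to $\bigl((1+R-\rho)/(1+2R-2\rho)\bigr)^{\alpha}$, whose logarithmic derivative equals $\alpha/\bigl((1+2R-2\rho)(1+R-\rho)\bigr)>0$. Hence the pattern ``every interference contribution decreases'' fails termwise in the UL, and the paper's one-line dismissal (``the second one is similar,'' resting on the asserted $\pdrho I<0$) glosses over exactly this term. Your proposed remedy---comparing $\frac{d}{d\rho}\log s_u(\rho)^{\alpha}=\alpha(R-u)/\bigl((1+R-\rho)(1+u-\rho)\bigr)$ against the log-derivative of $\sigma^2+\overline{I}_B(\rho)$---is the right shape of argument, but ``I expect this to follow'' is not a proof, and the comparison is delicate: as $u\to R$ the signal's log-derivative vanishes while the BS--BS term's does not, so no bound of the form ``signal growth dominates the worst single interference term's growth'' can hold uniformly on $[0,u]$. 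Any correct argument must work with the aggregate, weighting the one growing term against $\sigma^2$ and the strictly decreasing MS and left-cell contributions, e.g.\ via $\frac{d}{d\rho}\log\bigl(\sigma^2+\sum_i T_i\bigr)=\bigl(\sum_i T_i'\bigr)/\bigl(\sigma^2+\sum_i T_i\bigr)$ and explicit magnitude estimates. As submitted, the second bullet of the proposition is not established: you identified the crux inequality but did not prove it.
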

\begin{proof}

We only sketch the proof of the first statement; the second one is similar.
The proof is done by analyzing the derivative of the DL SINR, which is denoted $\gamma_{D}(u,v,\rho) = \frac{S(v,\rho)}{I(u,v,\rho)+z_{M}}$, with respect to $\rho$. Here the DL signal is $S(v,\rho)$ and the DL interference is $I(u,v,\rho)$.
Then from the rule of differentiating a quotient, we only look at the numerator of this derivative, since the denominator is squared.
The DL signal is $S(v,\rho) = P_B(\rho)\ell(R-\rho)$ and the DL interference is
\begin{align*}
	I(u,v,\rho) &= (1+R-\rho)^{\alpha} \cdot \Big[ P_M^{req} \ell(2R+\rho-v) \nonumber \\
	& \hspace{-1cm}  + P_M^{req} \ell(2R+\rho-v) + P_B^{req} \ell( 1 + 3R \slash 2 - v ) \nonumber \\
	& \hspace{-1cm} + P_B^{req} \ell( 1 + 5R \slash 2 + v ) + P_B^{req} \ell(u+v) \Big] \nonumber \\
	& \hspace{-1cm} = (1+R-\rho)^{\alpha} \cdot I(u,v,\rho).
\end{align*}
%
The numerator of $\pdrho \gamma(\rho)$ equals
\begin{align*}
	&\pdrho S(v,\rho)(I(u,v,\rho)+z_{M}) - \pdrho (I(u,v,\rho)+z_{M})S(v, \rho).
\end{align*}
For $\rho \in [0, v]$, the derivative os $S(v,\rho)$ is
\begin{align*}	
	\pdrho S(v,\rho) & = P_M^{req}\alpha(1+R-\rho)^{\alpha-1}(1+v-\rho)^{-\alpha} \nonumber \\  & \cdot \left( -1 + (1+R-\rho)(1+v-\rho)^{-1} \right),
\end{align*}
which is positive since $-1 + (1+R-\rho)(1+v-\rho)^{-1} > 0$.
The derivative $\pdrho I(u,v,\rho)$ equals
\begin{align*}
	&\pdrho (1+R-\rho)^{\alpha} I(u,v,\rho) + (1+R-\rho)^{\alpha} \pdrho I(u,v,\rho) \\
	& -\alpha (1+R-\rho)^{\alpha-1} I(u,v,\rho) + (1+R-\rho)^{\alpha} \pdrho I(u,v,\rho).
\end{align*}

It can be shown that $\pdrho I(u,v,\rho) < 0$. Using the fact that $S(v,\rho) > 0$,
$\pdrho S(v,\rho) > 0$ and $I(u,v,\rho)+z_{MS} > 0$, we can conclude that $\pdrho
\gamma_D(\rho) > 0, \forall \rho \in [0, v]$
\end{proof}
%
\begin{cor}\label{cor:SINRProduct}
	In stationary conditions, $\forall \rho \in [0, \min\{u,v\}]$,
	\begin{equation*}\label{eqn:objectiveFunction}
		\gamma_{U}(0) \cdot \gamma_{D}(0) \leq \gamma_{U}(\rho) \cdot \gamma_{D}(\rho).
	\end{equation*}
\end{cor}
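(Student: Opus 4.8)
The plan is to obtain the corollary directly from the two monotonicity statements of Proposition~\ref{prop:SINRProduct} combined with the positivity of both SINRs; no new estimates on the signal or interference terms should be needed. First I would observe that restricting to $\rho \in [0, \min\{u,v\}]$ guarantees $\rho \in [0,v]$ and $\rho \in [0,u]$ simultaneously, so both bullet points of the proposition are in force on this common interval. The first gives $\gamma_{D}(u,v,0) \le \gamma_{D}(u,v,\rho)$ and the second gives $\gamma_{U}(u,0) \le \gamma_{U}(u,\rho)$; in the abbreviated notation of the corollary these read $\gamma_{D}(0) \le \gamma_{D}(\rho)$ and $\gamma_{U}(0) \le \gamma_{U}(\rho)$. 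The stationary conditions are inherited directly from the proposition, so nothing extra must be assumed.

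Next I would invoke positivity of the two factors. Each SINR in \eqref{eqn:SINRUL} and \eqref{eqn:SINRDL} is a ratio of a strictly positive received signal power to a strictly positive denominator (AWGN power plus nonnegative interference), so $\gamma_{U}(\rho) > 0$ and $\gamma_{D}(\rho) > 0$ for every admissible $\rho$, in particular at $\rho = 0$. The remaining step is the elementary fact that multiplying two inequalities between positive quantities preserves the order: if $0 < a \le A$ and $0 < b \le B$, then $ab \le AB$. Applying this with $a = \gamma_{U}(0)$, $A = \gamma_{U}(\rho)$, $b = \gamma_{D}(0)$ and $B = \gamma_{D}(\rho)$ yields $\gamma_{U}(0)\cdot\gamma_{D}(0) \le \gamma_{U}(\rho)\cdot\gamma_{D}(\rho)$, which is exactly the claim.

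I do not expect a genuine obstacle, since the corollary is a direct consequence of each factor being monotone and positive. The only point deserving care is the domain bookkeeping: the two parts of Proposition~\ref{prop:SINRProduct} hold on different intervals ($[0,v]$ and $[0,u]$), so I must intersect them to $[0,\min\{u,v\}]$ before combining the two inequalities, which is precisely why the product bound is asserted on that range and no larger. I would also flag explicitly that positivity of both factors is what legitimises multiplying the inequalities, ruling out the failure mode that arises when combining inequalities between quantities of unknown sign.
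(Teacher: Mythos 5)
Your proposal is correct and matches the paper's intended argument: the corollary is stated without a separate proof precisely because it follows immediately from the two monotonicity statements of Proposition~\ref{prop:SINRProduct} on the intersected interval $[0,\min\{u,v\}]$, combined with the positivity of both SINRs, exactly as you argue. Your explicit attention to the domain intersection and to the positivity hypothesis needed for multiplying the inequalities is precisely the bookkeeping the paper leaves implicit.
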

From Prop.~\ref{prop:SINRProduct} and Cor.~\ref{cor:SINRProduct}, we can conclude that the sum-rate $R_{sum}(\rho)$, increases for $\rho \in [0, \min\{u,v\}]$.

The energy efficiency $EE$ is defined as the amount of bits transmitted per unit of energy~\cite{li2011energy}. It is given by
\begin{equation}\label{eqn:EEdef}
	EE(\rho) = \frac{R_{sum}(\rho)}{P_B(\rho) + P_M(\rho)}.
\end{equation}
CoMPflex improves the EE, both due to increased sum-rate and lower transmission power as functions of $\rho$.
%





\section{Performance Results}\label{sec:PerformanceResults}

We evaluate the sum-rate and EE of CoMPflex via numerical simulations.
The main simulation settings are listed in Tab.~\ref{tab:SimulationParameters}.
The positions of the MSs are random and uniform over their half-cell, all links have fading and we include $N$ interfering cell tiers both to the left and right.
\begin{table}[h]
	\caption{Simulation parameters.}
	\centering
			\begin{tabular}{ c l c }
		  	\hline
  			Parameter & Description & Simulation Setting\\
				\hline
				$R$ & Cell radius & $100$ m \\
				$\sigma^2$ & Noise power at MS and BS & $-174$ dBm \\
				$\alpha$ & Path loss exponent & $3,4,5$ \\
				$N$ & Number of interfering cells & $10$ \\
				$R_{U_0}$ & Required UL rate & $0.03$ bps \cite[Ch.11]{holma2012lte} \\
				$R_{D_0}$ & Required DL rate & $0.06$ bps \cite[Ch.11]{holma2012lte}\\
				$\varepsilon$ & Cell edge outage probability & $0.1$ \\
				\hline
			\end{tabular}
	\label{tab:SimulationParameters}
\end{table}
%

The sum-rate performance results are shown in Fig.~\ref{fig:SumRateComparisons}. We compare the proposed CoMPflex scheme with the baseline, using both deployments depicted in Figs.\ref{fig:SystemModelCoMPflex}a and \ref{fig:SystemModelCoMPflex}b, the latter denoted as the worst-case interference. Since the baseline serves unidirectional traffic over the entire transmission phase, we need two such phases to serve two-way traffic to the MSs. Therefore, the sum-rate is $\frac{R_U+R_D}{2}$, where $R_U$ is the total UL rate and $R_D$ is the total DL rate. We also depict the 
results for $\rho=0$ (no splitting) for CoMPflex (ordinary FD) and the baseline. The performance of CoMPflex is almost always better than the baseline. Also, in both interference deployments, there is a benefit in increasing $\rho$. This confirms the conclusions from Sec.~\ref{sec:analysis}, but now with more realistic conditions. We also see that the difference between the sum-rate when using constant power versus power adjustment is negligible, such that power adjustment is chiefly beneficial for higher EE. It is also interesting that there is a gain in sum-rate \emph{both} from splitting the BSs, as well as using the CoMPflex scheme instead of the baseline, even in the worst case.

\begin{figure}[t]
	\centering
		\includegraphics[width=0.85\linewidth]{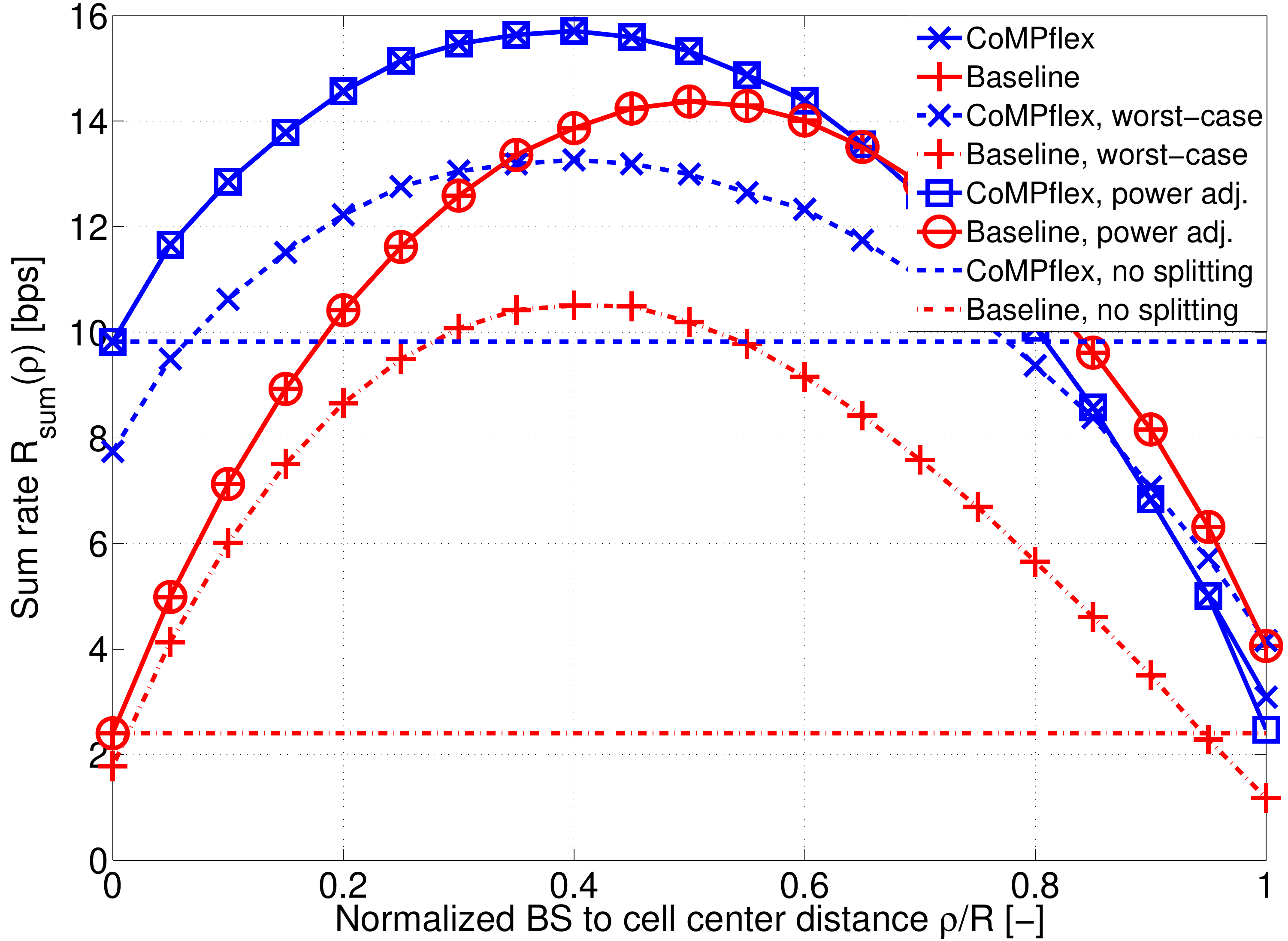}
	\caption{Sum-rate comparison of CoMPflex and baseline, comparing worst-case with simulation, where $\alpha = 4$.}
	\label{fig:SumRateComparisons}
\end{figure}
%
%
%
To evaluate the EE gains of CoMPflex, we normalize the achieved EE at a certain $\rho$ with the EE when $\rho = 0$ (Fig.\ref{fig:Schemes}c).
The normalized EE is:
\begin{equation}\label{eqn:normalizedEE}
	\eta(\rho) = \frac{EE(\rho)}{EE(\rho = 0)}.
\end{equation}
We also look at the total transmission power $P_{sum}(\rho) = P_B(\rho)+P_M(\rho)$, and
study how it depends on $\rho$. These curves, along with the normalized EE, are shown in
Fig.~\ref{fig:EEandSumPowerCombined}, where we see that the total transmission power decreases as $\rho$ increases.
This is to be expected since the distance between a MS and its serving BS is decreased, requiring less power is used to maintain the same performance.
The combination of a lower required transmission power with a higher sum-rate results in the observed dramatic increase in EE, especially when $\rho$ tends to $R/2$.
This increase is further accentuated in propagation environments with higher $\alpha$ values.

\section{Conclusion}\label{sec:Conclusion}

We have shown that the proposed CoMPflex scheme allows emulation of in-band FD operation by spatially dislocating the UL and DL traffic into two HD-BSs.
Our results show that this splitting leads to a substantial increase in sum-rate as well as an EE increase ranging from $15\times$ to $45\times$ for high BSs splitting distance and increasing path-loss exponents. This initial study uses a simple model, but the gains and the insights obtained warrant further analysis that relies on more complex models, such as the ones based on stochastic geometry.

\appendix
\label{subsec:InterferenceExpressionsAll}

Using Fig.~\ref{fig:SystemModelCoMPflex}, the distances between the MS in the $n$th cell and the receiving BS are
\begin{align*}
	d_{MB,L}^{(n)} = 2nR - u_{n,L} + \rho, &\hspace{0.75cm} d_{MB,R}^{(n)} = 2nR + u_{n,R} - \rho,
\end{align*}
for the left (L) and right (R) cell respectively. $u_{n,L}$ and $u_{n,R}$ describe the same distances as $u$ and $v$ in the left and right interfering cells.
The BS-MS, BS-BS and MS-MS distances are, respectively,
\begin{align*}
	d_{BM,L}^{(n)} = 2nR + \rho - v,&\hspace{0.75cm} d_{BM,R}^{(n)} = 2nR - \rho + v,\\
	d_{BB,L}^{(n)} = 2nR + 2\rho, &\hspace{0.75cm} d_{BB,R}^{(n)} = 2nR - 2\rho,\\
	d_{MM,L}^{(n)} = 2nR - u_{n,L} - v, &\hspace{0.75cm} d_{MM,R}^{(n)} = 2nR + u_{n,R} + v,
\end{align*}
where the random variables $u$, $v$, $u_{n,L}$ and $u_{n,R}$ are independent. In the
worst-case scenario (Fig.~\ref{fig:SystemModelCoMPflex}b), the MS-BS, BS-MS,
BS-BS and MS-MS distances are, respectively,
\begin{align*}
	d_{MB,L}^{(n),wc} = 2nR - R + \rho, &\hspace{0.75cm} d_{MB,R}^{(n),wc} = 2nR - \rho,\\
	d_{BM,L}^{(n),wc} = 2nR - v,&\hspace{0.75cm} d_{BM,R}^{(n),wc} = 2nR - R \slash 2 + v,\\
	d_{BB,L}^{(n),wc} = 2nR + \rho, &\hspace{0.75cm} d_{BB,R}^{(n),wc} = 2nR - R \slash 2 - \rho,\\
	d_{MM,L}^{(n),wc} = 2nR - R - v, &\hspace{0.75cm} d_{MM,R}^{(n),wc} = 2nR + v.
\end{align*}
Letting $\psi \in \{ MB, BM, BB, MM \}$, and $h_{\psi,\varphi}^{(n)}$,
$\varphi \in \{L,R\}$ the channel between a node in the $n$th cell and
the receiver, the interference terms are then
\begin{equation}
	I_{\psi} = \sum_{n=1}^{\infty} \left[ h_{\psi,L}^{(n)} \sqrt{\ell (d_{\psi,L}^{(n)})} x_{L}^{(n)} + h_{\psi,R}^{(n)} \sqrt{\ell (d_{\psi,R}^{(n)})} x_{R}^{(n)} \right],
\end{equation}
where $x_{L}^{(n)}$ and $x_{R}^{(n)}$ are symbols sent from appropriate nodes.
\begin{figure}[t]
	\centering
		\includegraphics[width=0.825\linewidth]{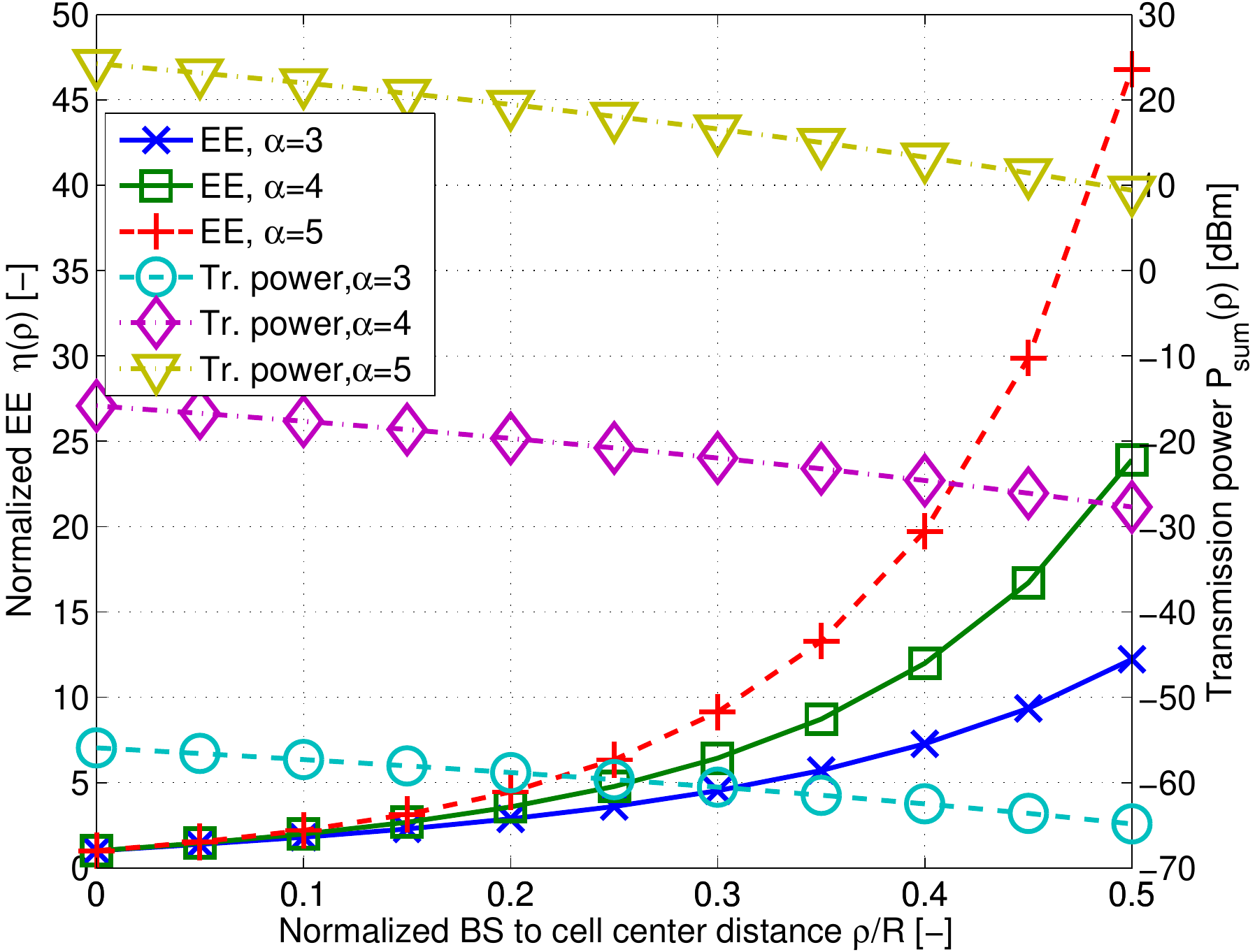}
	\caption{Power usage and EE of CoMPflex.}
	\label{fig:EEandSumPowerCombined}
\end{figure}

\bibliographystyle{ieeetr}
\bibliography{biblio}

\end{document}